\newcommand{\id}{\ensuremath{\mathds{1}}}
\newtheoremstyle{mystyle}
  {6pt}
  {6pt}
  {\normalfont}
  {0pt}
  {\bf}
  {.}
  { }
  {}
\theoremstyle{mystyle}
\newtheorem{observation}{Observation}
\newtheorem{lemma}[observation]{Lemma}
\newtheorem{theorem}[observation]{Theorem}
\newtheorem{definition}[observation]{Definition}
\begin{document}

\title{Characterizing Quantum Networks: Insights from Coherence Theory} 

\author{Tristan Kraft}
\affiliation{Naturwissenschaftlich-Technische Fakultät, Universität Siegen, Walter-Flex-Straße 3, 57068 Siegen, Germany}
\author{Cornelia Spee}
\affiliation{Institute for Quantum Optics and Quantum Information, Austrian Academy of Sciences, 1090 Vienna, Austria}
\author{Xiao-Dong Yu}
\affiliation{Naturwissenschaftlich-Technische Fakultät, Universität Siegen, Walter-Flex-Straße 3, 57068 Siegen, Germany}
\author{Otfried G\"uhne}
\affiliation{Naturwissenschaftlich-Technische Fakultät, Universität Siegen, Walter-Flex-Straße 3, 57068 Siegen, Germany}

\date{\today}

\begin{abstract}
Networks based on entangled quantum systems enable interesting applications 
in quantum information processing and the understanding of the resulting
quantum correlations is essential for advancing the technology. We show
that the theory of quantum coherence provides powerful tools for analyzing
this problem. For that, we demonstrate that a recently proposed approach 
to network correlations based on covariance matrices can be improved and
analytically evaluated for the most important cases.
\end{abstract}

\maketitle

{\it Introduction.---}
Quantum networks~\cite{Kimble2008,Sangouard2011,Simon2017,Wehner2018} 
have recently attracted much interest as they have been identified as 
a promising platform for quantum information processing, such as 
long-distance quantum communication~\cite{Cirac1998,Gisin2002Review}. 
In an abstract sense, a quantum network consists of several sources, 
which distribute entangled quantum states to spatially separated nodes,
then the quantum information is processed locally in these nodes. This 
may be seen as a generalization of a classical causal model~\cite{Spirtes2000, Pearl2009}, 
where the shared classical information between the nodes is replaced by 
quantum states. Clearly, it is important to understand the quantum correlations 
that arise in such a quantum network. Recent developments have shown that 
the network structure and topology leads to novel notions of nonlocality~\cite{Renou2019triangle,Gisin2020}, as well as new concepts of 
entanglement and separability \cite{navascues2020,kraft2020,luo2020},
which differ from the traditional concepts and definitions~\cite{Acin2001, GuehneToth2009}. Dealing with these new concepts requires theoretical 
tools for their analysis. So far, examples of entanglement criteria for the 
network scenario have been derived using the mutual information \cite{navascues2020,kraft2020}, the fidelity with pure states 
\cite{kraft2020,luo2020}, or covariance matrices build from measurement
probabilities \cite{kela2020, aberg2020}, but these ideas work either only
for specific examples, or require numerical optimizations 
for their evaluation.

In this paper we demonstrate that the theory of quantum coherence 
provides powerful tools for analyzing correlations in quantum networks.
In recent years, quantum coherence was under intense research,
it was demonstrated that coherence is essential in quantum information 
applications and entanglement generation~\cite{Aberg2006,Baumgratz2014,Killoran2016,Regula_2018} and a resource
theory of it has been developed~\cite{StreltsovReview,ChitambarReview,Winter2016,Chitambar2016,Streltsov2017}.
We provide a direct link between the theory of multisubspace coherence~\cite{ringbauer2018,kraft2019} and the approach to quantum networks using covariance
matrices established in Refs.~\cite{kela2020, aberg2020}. This 
allows to solve analytically the criteria developed there for important
cases; furthermore, some conjectures can be proved and, besides that, our methods can be applied to large networks for which tools based on numerical optimization are infeasible. We note that, since the covariance matrix approach is essentially
a tool coming from classical causal models~\cite{kela2020}, our results demonstrate
that results from the theory of {\it quantum} coherence are useful beyond the level of quantum states
for the analysis of {\it classical} networks.

\begin{figure}[t!]
\centering
\subfloat[\label{sfig:trianglenetwork}]{
  \includegraphics[width=.4\linewidth]{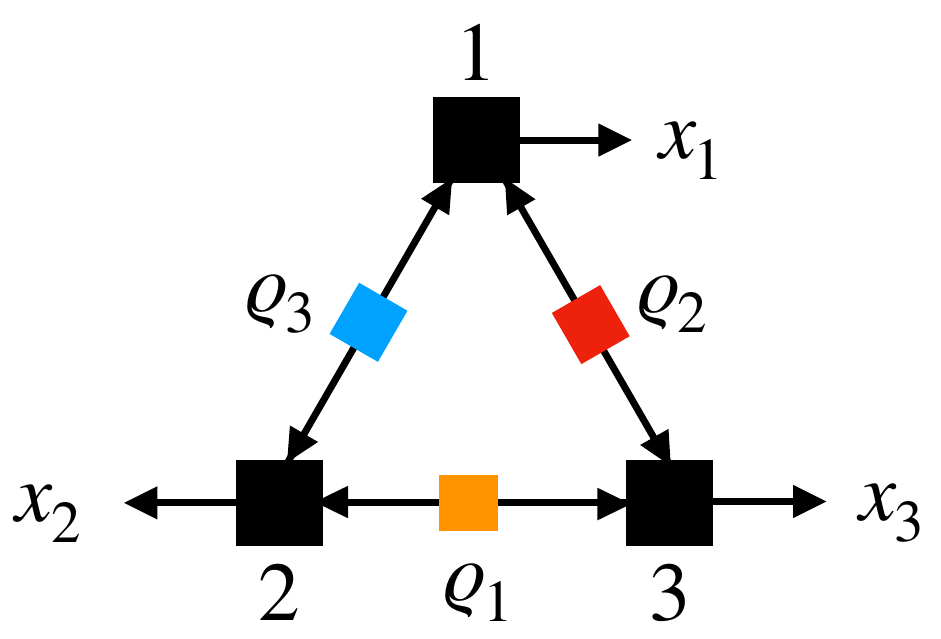}
}
\subfloat[\label{sfig:4completeNetwork}]{
  \includegraphics[width=.36\linewidth]{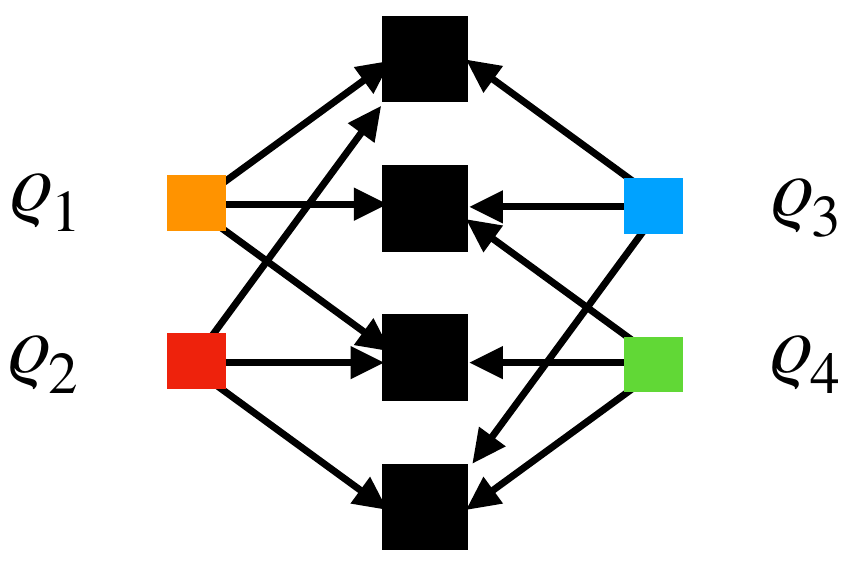}
}
\hspace{0.3cm}
\subfloat[\label{sfig:decomposition}]{
  $\displaystyle \Gamma(\mathbf{v})=\mqty(\color{cyan}\blacksquare & \color{cyan}\blacksquare & \phantom{\blacksquare} \\ \color{cyan}\blacksquare & \color{cyan}\blacksquare &  \\  &  & )_{3} 
  + \mqty(\color{red}\blacksquare & \phantom{\blacksquare} & \color{red}\blacksquare \\  &  &  \\ \color{red}\blacksquare &  & \color{red}\blacksquare )_{2} 
  + \mqty(\phantom{\blacksquare} &  &   \\ & \color{orange}\blacksquare & \color{orange}\blacksquare \\  & \color{orange}\blacksquare  & \color{orange}\blacksquare )_{1} $
}
\caption{(a) In its simplest scenario, the triangle network consists of three nodes that produce measurement outcomes $x_1,\dots,x_3$ and three sources that distribute bipartite entanglement that is shared amongst the nodes. (b) A network consisting of four nodes that is 3-complete, i.e., it features four sources that distribute tripartite entanglement. (c) The covariance matrix of the triangle network has a $3\times3$ block structure and consists of three terms, where 
$\qty(\square)_{i}$ denotes those blocks that are contributed by the source $i$.}
\label{fig:networks}
\end{figure}

{\it Quantum networks.---}
The simplest non-trivial network is the triangle network, where 
three nodes are mutually connected by three sources that prepare 
bipartite quantum states that are then subsequently shared with 
the nodes, see also Fig.~\ref{sfig:trianglenetwork}. More generally,
one has $M$ sources, labelled by  $m=1,2,\dots,M$ that independently 
produce quantum states $\varrho_m$, which are then distributed to $N$
nodes, labelled by $n=1,2,\dots,N$. For every source $m$ we denote by 
$\mathcal{C}_m$ the set of all connected nodes that have access to 
the state $\varrho_m$. The topology of the network captures the 
fact that not all vertices are connected to a single source, thus 
limiting the influence that each source can have on the different 
nodes. 

In the simplest case, at each node a measurement is performed that 
is described by a POVM  $\mathbf{A}^{(n)}=\{ A_{x}^{(n)}\}_{x}$. 
The observed probability distribution over the outcomes reads
\begin{equation}
\label{eq:networkdistribution}
p(x_1\dots x_N)=
\tr[(A_{x_1}^{(1)}\otimes\cdots\otimes A_{x_{N}}^{(N)})\varrho_1\otimes\cdots\otimes\varrho_M].
\end{equation}
The central question is whether a given probability distribution 
may originate from a network with a given topology. We note that 
the set of probability distributions that are compatible with a 
given network topology is non-convex and thus, in general, hard 
to characterize. One way to overcome this problem was put forward in Ref.~\citep{aberg2020}. The idea is to map the set of probability 
distributions compatible with the network to the space of covariance 
matrices, and then consider a convex relaxation of the problem. 

For this purpose, a so-called feature map is defined that maps the 
outcomes $x_n$ at each vertex $n$ to a vector $\mathbf{v}_{x_n}^{(n)}\in\mathcal{V}_n$, 
where the $\mathcal{V}_n$ are some orthogonal vector spaces. Combining 
all the feature maps, one obtains a random vector $\mathbf{v}$ with 
components $\mathbf{v}_{x_1\dots,x_N}=\mathbf{v}_{x_1}^{(1)}+\cdots+\mathbf{v}_{x_N}^{(N)}$. 
The covariance matrix is then defined as
\begin{equation}
\label{eq:CovMatrix}
\Gamma(\mathbf{v}) = 
E(\mathbf{v}\mathbf{v}^{\dagger})-E(\mathbf{v})E(\mathbf{v})^{\dagger}
\end{equation}
with $E(\mathbf{v}\mathbf{v}^{\dagger}) = \sum_{x_1,\dots,x_N}\mathbf{v}_{x_1,\dots,x_N}\mathbf{v}^{\dagger}_{x_1,\dots,x_N}P(x_1,\dots,x_N)$ and 
$E(\mathbf{v}) = \sum_{x_1,\dots,x_N}\mathbf{v}_{x_1,\dots,x_N}P(x_1,\dots,x_N)$. 
Due to the structure of $\mathbf{v}$, the covariance matrix has a natural
block structure: $\Gamma$ is an $N\times N$ block matrix with blocks $\Gamma_{\alpha\beta}$, and each block is a $r \times r$ matrix, with
$r$ being the dimension of $\mathcal{V}_n$. The standard covariance matrix 
formalism from mean values is a special instance of this notion, where one 
assigns to the outcomes $x_n$ just real numbers and hence takes the $\mathcal{V}_n$ 
to be one-dimensional. Here, however, we will assume that the feature 
map simply maps the outcome $x_n$ to $\ket{x_n}$, as for measurements 
with more than two outcomes the mean value contains less information in
comparison with the probability distribution.

{\it Covariance matrices and coherence.---}
The topology of the network imposes strong constraints on the structure 
of the covariance matrix. More precisely, the covariance matrix can be 
decomposed in a sum of positive matrices that have a certain block 
structure, corresponding to the sources \cite{aberg2020,kela2020}. 
The verification of this structure is then an instance of a semidefinite 
program (SDP)~\citep{boyd2004,Gaertner2012}. 
For simplicity, we will restrict our attention in the following to
$k$-complete networks. This means that all sources distribute their 
states to $k < N$ parties and all possible $k$-partite sources are 
being used, so we have $M={N \choose k}$ (see also Fig.~\ref{sfig:4completeNetwork}). Our results
can be extended to more complicated network topologies.

The criterion from Refs.~\cite{aberg2020,kela2020} states that one 
has to find a decomposition of $\Gamma (\mathbf{v})$ into blocks $Y_m$
according to
\begin{align}
\label{eq:SDPtest}
\text{find: } & Y_m \geq 0 \\
\text{subject to: } & Y_m = \Pi_m Y_m \Pi_m \label{eq:blockconstraint} 
\mbox{ and } \Gamma (\mathbf{v}) = \sum_{m=1}^{M} Y_m,
\end{align}
where $\Pi_m= \sum_{i\in \mathcal{C}_m} P_i$, with $P_n$ being the 
projector onto $\mathcal{V}_n$; so $\Pi_m$ is effectively a projector
onto all spaces affected by the source $m$. To give an example, we 
depict this decomposition for the case of the triangle network in Fig.~\ref{sfig:decomposition}. Note that the formulation in 
Eqs.~(\ref{eq:SDPtest}, \ref{eq:blockconstraint}) is different from
(but clearly equivalent to) the formulation in  
Ref.~\cite{aberg2020}. The advantage of our reformulation is that 
it allows to establish a link to the theory of quantum coherence.

When characterizing quantum coherence, one starts with a fixed basis
$\{  \ket{\phi_i} \}$ of the Hilbert space. The coherence of a quantum state 
is then given by the amount of off-diagonal elements of its density 
matrix, if expressed in this basis~\cite{Baumgratz2014, StreltsovReview}. A given pure state 
$\ket{\psi}$ is said to have coherence rank $k$, if it can be expressed
using $k$ elements of the basis $\{ \ket{\phi_i} \}$, and a mixed state
has coherence number $k$, if it can be written as a mixture of pure states
with coherence rank $k$~\cite{Killoran2016,StreltsovReview, Regula_2018, johnston2018, ringbauer2018}. This can be
extended to the notion of block coherence~\cite{kraft2019}. There, one
takes a set of orthogonal projectors $\qty{P_i}$ such that any vector 
$\ket{\psi}$ can be decomposed as $|\psi\rangle=\sum_{i}|\psi_{i}\rangle$, 
where $|\psi_{i}\rangle=P_{i}|\psi\rangle$. The vector $\ket{\psi}$ is said 
to have block coherence rank $k$ if exactly $k$ terms in the decomposition 
do not vanish. The convex hull of rank one operators $\ketbra{\psi}$ with 
block coherence rank $k$ we denote as $\mathcal{BC}_k$. Then, an operator 
$X$ has block coherence number $k+1$ if it is in $\mathcal{BC}_{k+1}$ 
but not in $\mathcal{BC}_{k}$. In general we have the inclusion
$\mathcal{BC}_1\subset\mathcal{BC}_2\subset\cdots\subset\mathcal{BC}_N$.
Note that the notions of coherence rank and coherence number
are well studied and several criteria and properties are known~\cite{StreltsovReview}.

Having this in mind, it is clear that Eqs.~(\ref{eq:SDPtest}, 
\ref{eq:blockconstraint}) are nothing but a reformulation of 
the notion of multisubspace coherence for the covariance matrix
and we arrive at the first main result of this paper:

\begin{observation}
If a covariance matrix $\Gamma(\mathbf{v})$ has block coherence number $k+1$, 
then it cannot have originated from a $k$-complete network.
\end{observation}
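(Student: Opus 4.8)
The plan is to argue by contraposition: assuming $\Gamma(\mathbf{v})$ is compatible with a $k$-complete network, I will show $\Gamma(\mathbf{v}) \in \mathcal{BC}_k$, so that its block coherence number is at most $k$ and in particular not $k+1$. The first step is to invoke the characterisation of Refs.~\cite{aberg2020,kela2020} recalled in Eqs.~(\ref{eq:SDPtest},\ref{eq:blockconstraint}): compatibility with the network guarantees positive semidefinite operators $Y_m$ with $\Gamma(\mathbf{v}) = \sum_{m=1}^{M} Y_m$ and $Y_m = \Pi_m Y_m \Pi_m$. Because the network is $k$-complete, every source connects exactly $k$ nodes, so $|\mathcal{C}_m| = k$ and $\Pi_m = \sum_{i \in \mathcal{C}_m} P_i$ is the projector onto the direct sum of precisely $k$ of the blocks $\mathcal{V}_i$.

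The second step is to locate each summand inside $\mathcal{BC}_k$. Since $Y_m \ge 0$ and $Y_m = \Pi_m Y_m \Pi_m$, its spectral decomposition reads $Y_m = \sum_j \lambda_j \ketbra{\psi_j}$ with $\lambda_j \ge 0$ and eigenvectors $\ket{\psi_j}$ in the range of $\Pi_m$. In the block decomposition $\ket{\psi_j} = \sum_i P_i\ket{\psi_j}$ only the terms with $i \in \mathcal{C}_m$ can contribute, so $\ket{\psi_j}$ has block coherence rank at most $|\mathcal{C}_m| = k$; hence $\ketbra{\psi_j} \in \mathcal{BC}_k$ and therefore $Y_m \in \mathcal{BC}_k$ (here I use that the $\mathcal{BC}_j$ are nested and that $\mathcal{BC}_k$ is closed under non-negative combinations, passing to normalised operators if one prefers to phrase everything in terms of states).

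The third step is just assembly: $\Gamma(\mathbf{v}) = \sum_m Y_m$ is a non-negative combination of elements of $\mathcal{BC}_k$ — equivalently, after dividing by $\tr[\Gamma(\mathbf{v})]$, a convex combination — and $\mathcal{BC}_k$ is convex, so $\Gamma(\mathbf{v}) \in \mathcal{BC}_k$. This contradicts the assumption that $\Gamma(\mathbf{v})$ has block coherence number $k+1$, which by definition requires $\Gamma(\mathbf{v}) \notin \mathcal{BC}_k$, and the observation follows.

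As for difficulty, the substantive ingredient — that network compatibility forces the block-structured positive decomposition of $\Gamma(\mathbf{v})$ — is precisely the result of Refs.~\cite{aberg2020,kela2020} and can be used as a black box here. What remains is essentially bookkeeping: identifying ``support contained in $k$ of the blocks $\mathcal{V}_i$'' with ``block coherence rank $\le k$'', and being slightly careful about normalisation so that the cone of block-structured positive summands maps into the convex set $\mathcal{BC}_k$. I do not expect a genuine obstacle beyond making these identifications precise.
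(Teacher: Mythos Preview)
Your argument is correct and is essentially the same as the paper's: the paper simply notes that the decomposition in Eqs.~(\ref{eq:SDPtest},\ref{eq:blockconstraint}) is a reformulation of the block coherence number condition, since each $Y_m = \Pi_m Y_m \Pi_m$ is supported on exactly $k$ blocks and hence lies in $\mathcal{BC}_k$. You have spelled out the spectral-decomposition and convexity bookkeeping that the paper leaves implicit, but the content is identical.
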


\begin{figure}[t!]
\subfloat[\label{sfig:triangleComp}]{
  \includegraphics[width=.47\linewidth]{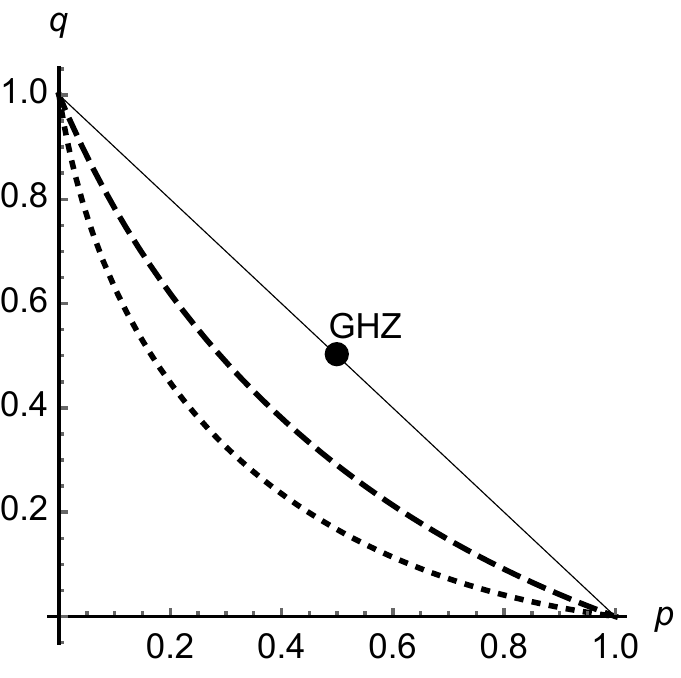}
}
\hfill
\subfloat[\label{sfig:fullN5}]{
  \includegraphics[width=.47\linewidth]{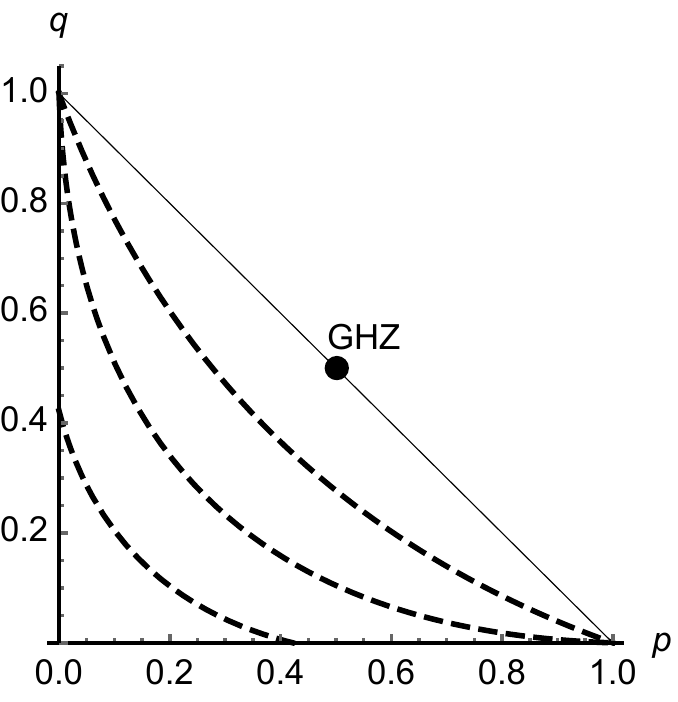}
}
\caption{(a) For the triangle network we compare the criterion in Eq.~\eqref{eq:aberg} (dotted line) to the monogamy criterion in Eq.~\eqref{eq:monogamy} (dashed line). (b) Results of the GHZ-type distribution in Eq.~\eqref{eq:GHZ2outcomes} using Eq.~\eqref{eq:resultGHZNk} for $N=5$ and $k=4,3,2$. Everything above the lines is detected to be incompatible with the respective network structure. $GHZ$ denotes the distribution that is obtained from measuring $\sigma_z$ on $\ket{GHZ}$.}
\label{fig:comparisionN}
\end{figure}

{\it Networks with dichotomic measurements.---}
For dichotomic measurements, that is, measurements with two 
outcomes, one can expect from our discussion after 
Eq.~(\ref{eq:CovMatrix}) that the covariance matrix can be 
simplified. Indeed, with our feature map the blocks of the 
covariance matrix are always of the form 
$(\Gamma_{\alpha \beta })_{ij}=(p_{ij}-q_i r_j)$, 
where $p_{ij}$ is a probability distribution, and 
$q_i=\sum_jp_{ij}$ and $r_j=\sum_ip_{ij}$ are its marginals. 
These blocks have 
vanishing row and column sums, so $(1, \dots, 1)^T$ is a 
(left and right) eigenvector to the eigenvalue zero. For 
the dichotomic case, the blocks are $2\times 2$ matrices, 
so only one nonzero eigenvalue remains, and we must have 
$\Gamma_{\alpha \beta } \propto (\id-\sigma_x)$. 
So we have:

\begin{observation}
\label{Obs:Factorize}
Consider a  network of $N$ vertices, where each node performs 
a dichotomic measurement. Then the covariance matrix $\Gamma(\mathbf{v})$ is 
of the form
\begin{equation}
\Gamma(\mathbf{v})=C\otimes (\id-\sigma_x),
\end{equation}
where $C$ is an $N\times N$ matrix.
\end{observation}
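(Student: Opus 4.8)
The plan is to compute the blocks of $\Gamma(\mathbf{v})$ entrywise and to show that each one is a scalar multiple of $\id-\sigma_x$. Since every node performs a dichotomic measurement and the feature map sends $x_n\mapsto\ket{x_n}$, each space $\mathcal{V}_n$ is two-dimensional, so $\Gamma(\mathbf{v})$ is an $N\times N$ array of $2\times2$ blocks $\Gamma_{\alpha\beta}$. From \eqref{eq:CovMatrix} together with \eqref{eq:networkdistribution}, the $(i,j)$ entry of $\Gamma_{\alpha\beta}$ is $(\Gamma_{\alpha\beta})_{ij}=p_{ij}-q_i r_j$, where $p_{ij}=p(x_\alpha=i,x_\beta=j)$ is the bivariate marginal of the network distribution and $q_i=\sum_j p_{ij}$, $r_j=\sum_i p_{ij}$ are the corresponding single-node marginals; for a diagonal block this specializes to $p_{ij}=\delta_{ij}q_i$, which is still of the same form.

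First I would verify that every such block annihilates the all-ones vector from both sides: $\sum_j(\Gamma_{\alpha\beta})_{ij}=q_i-q_i\sum_j r_j=0$ since $\sum_j r_j=1$, and by symmetry $\sum_i(\Gamma_{\alpha\beta})_{ij}=0$. Hence $(1,1)^T$ is both a left and a right null vector of each $2\times2$ block. The next step is the elementary observation that any $2\times2$ matrix with vanishing row and column sums must equal $a\bigl(\begin{smallmatrix}1&-1\\-1&1\end{smallmatrix}\bigr)=a(\id-\sigma_x)$ for some scalar $a$. Applying this blockwise gives $\Gamma_{\alpha\beta}=C_{\alpha\beta}(\id-\sigma_x)$, and collecting the scalars into the matrix $C=(C_{\alpha\beta})_{\alpha,\beta=1}^N$ yields $\Gamma(\mathbf{v})=C\otimes(\id-\sigma_x)$, as claimed.

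I do not anticipate a genuine obstacle; the argument is essentially the one already sketched in the paragraph preceding the statement. The only points that need a little care are the normalization of the bivariate and single-node marginals (immediate from \eqref{eq:networkdistribution}) and the treatment of the diagonal blocks, where the entrywise formula must be checked directly rather than read off the bipartite marginals. It is also worth recording, although not needed for the statement itself, that $C$ inherits positive semidefiniteness from $\Gamma(\mathbf{v})$, since up to a fixed positive rescaling it is a compression of $\Gamma(\mathbf{v})$; this is what makes the factorization useful, as it reduces the block-coherence analysis of $\Gamma(\mathbf{v})$ to a question about the single $N\times N$ matrix $C$.
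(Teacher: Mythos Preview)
Your proposal is correct and follows essentially the same route as the paper: identify each block as $(\Gamma_{\alpha\beta})_{ij}=p_{ij}-q_ir_j$, observe that the row and column sums vanish so that the all-ones vector is a two-sided null vector, and conclude that in the $2\times2$ case the block must be proportional to $\id-\sigma_x$. Your treatment is in fact slightly more careful than the paper's sketch, since you handle the diagonal blocks explicitly and record that $C$ inherits positive semidefiniteness from $\Gamma(\mathbf{v})$.
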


So, for evaluating the criterion for $k$-completeness in the case of dichotomic
measurements, one just has to check the $k$-level coherence of the matrix
$C$. While this is, in general, still hard, the solution can directly be written down for the simplest non-trivial case of $k=2$ \cite{ringbauer2018}. Namely, 
it is known that a matrix $X$ has coherence number less than or equal to two 
if and only if the so-called comparison matrix $M(X)$ defined by
\begin{equation}
(M[X])_{i j}=\left\{\begin{array}{ll}
\left|X_{i i}\right| & \text { if } i=j \\
-\left|X_{i j}\right| & \text { if } i \neq j
\end{array}\right.
\label{eq-comparisonmatrix}
\end{equation}
is positive semidefinite. Thus, we have:

\begin{observation}
\label{obs:comparisonMatrix}
If the comparison matrix $M(C)$ coming from the covariance matrix 
has a negative eigenvalue, then the observed probability 
distribution is incompatible with a network of bipartite 
sources.
\end{observation}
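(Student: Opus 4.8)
\emph{Proof strategy.} The plan is to chain together the preceding observations so that the claim reduces to the comparison-matrix characterization of coherence number two recalled around Eq.~\eqref{eq-comparisonmatrix}. Suppose, towards a contradiction, that the observed distribution is compatible with a network of bipartite sources, i.e.\ a $2$-complete network. Then its covariance matrix $\Gamma(\mathbf v)$ admits a decomposition of the type in Eqs.~(\ref{eq:SDPtest},~\ref{eq:blockconstraint}) with $k=2$, so $\Gamma(\mathbf v)$ lies in $\mathcal{BC}_2$; equivalently, by Observation~1, $\Gamma(\mathbf v)$ has block coherence number at most $2$, meaning it can be written as a mixture of pure states each of block coherence rank at most two.

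The next step is to invoke Observation~\ref{Obs:Factorize}: since every measurement is dichotomic, $\Gamma(\mathbf v)=C\otimes(\id-\sigma_x)$, and since $\Gamma(\mathbf v)\ge 0$ while $\id-\sigma_x\ge 0$, a short eigenvector argument gives $C\ge 0$ as well. The crucial structural point is that $\id-\sigma_x=2\ketbra{-}$ has rank one, so the range of $\Gamma(\mathbf v)$ is contained in $\mathbb{C}^{N}\otimes\ket{-}$. Hence every pure state occurring in a decomposition of $\Gamma(\mathbf v)$ has the form $\ket{\phi}\otimes\ket{-}$, and with respect to the block projectors $P_n=\ketbra{n}\otimes\id_2$ its block coherence rank equals the number of nonvanishing components of $\ket{\phi}$ in the computational basis of $\mathbb{C}^{N}$, that is, the ordinary coherence rank of $\ket{\phi}$. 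Conversely, any rank-one decomposition of $C$ lifts to one of $\Gamma(\mathbf v)$ by tensoring each term with $\ketbra{-}$ (after rescaling by $2$). Therefore the block coherence number of $\Gamma(\mathbf v)$ equals the ordinary coherence number of the $N\times N$ matrix $C$.

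Combining the two steps, compatibility with a network of bipartite sources forces the coherence number of $C$ to be at most $2$, hence by the cited characterization the comparison matrix $M(C)$ is positive semidefinite. The statement then follows by contraposition: if $M(C)$ has a negative eigenvalue, then $C$ has coherence number at least $3$, so $\Gamma(\mathbf v)\notin\mathcal{BC}_2$, and by Observation~1 the distribution cannot have arisen from a $2$-complete network.

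I expect the middle step to be the main obstacle, namely establishing rigorously that the block coherence number of $C\otimes(\id-\sigma_x)$ coincides with the coherence number of $C$. Concretely, one must check that, because the range of $\Gamma(\mathbf v)$ is confined to $\mathbb{C}^{N}\otimes\ket{-}$, block-coherent decompositions of $\Gamma(\mathbf v)$ and coherent decompositions of $C$ can be matched term by term while preserving positivity and the relevant rank counts. Everything else is a direct application of Observations~1 and~\ref{Obs:Factorize} together with the known comparison-matrix result.
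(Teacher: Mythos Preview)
Your proof is correct and follows the same approach as the paper: chain Observation~1, Observation~\ref{Obs:Factorize}, and the comparison-matrix characterization from Ref.~\cite{ringbauer2018}. The paper simply asserts without justification that ``one just has to check the $k$-level coherence of the matrix $C$,'' whereas you supply the argument (via the rank-one structure of $\id-\sigma_x$) that the block coherence number of $\Gamma(\mathbf v)$ equals the ordinary coherence number of $C$; this is a welcome elaboration rather than a different route.
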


{\it Example of a GHZ-type distribution.---}
Consider the family of distributions that have previously been 
studied in Refs.~\cite{renou2019,aberg2020}
\begin{eqnarray}\label{eq:GHZ2outcomes}
    &P&(x_1,\dots ,x_N)=\notag\\
    &p&\delta_0^{(N)} + q\delta_1^{(N)} + (1-p-q)\frac{1-\delta_0^{(N)}-\delta_1^{(N)}}{2^N-2},
\end{eqnarray}
where $\delta_i^{(N)}=\prod_{j=1}^N\delta_{ix_j}$. For $p=q=\frac{1}{2}$ 
this corresponds to measuring locally $\sigma_z$ on an $N$-particle 
Greenberger-Horne-Zeilinger (GHZ) state $\ket{GHZ}= (\ket{00\dots0}+\ket{11\dots1})/\sqrt{2}$. The covariance 
matrix for this distribution reads
\begin{equation}\label{example}
C = \Delta \id + \chi\ketbra{\mathbf{1}},
\end{equation}
where $\Delta=2^{N-2}(1-p-q)/(2^{N}-2)$, 
$\chi=\tfrac{1}{4}[1-(p-q)^{2}]-\Delta$ 
and $|\mathbf{1}\rangle=\sum_{n=1}^{N}|n\rangle$. 
From Eq.~(\ref{eq-comparisonmatrix}) we can conclude that
$C$ has coherence number less or equal two if and only if the matrix $M(C)=(\Delta+2\chi)\id-\chi\ketbra{\mathbf{1}}$ is positive 
semidefinite. This matrix has eigenvalues $\lambda_1=\Delta+2\chi$ 
and $\lambda_2=\Delta-(N-2)\chi$.
It follows that $C$ is incompatible with a $2$-complete 
network if
\begin{equation}\label{eq:aberg}
q > p+\kappa-\sqrt{4\kappa p+(\kappa-1)^2},
\end{equation}
where $\kappa=[(N-1)2^{N-2}]/[(N-2)(2^{N-1}-1)]$.
This analytically recovers the numerical results from 
Ref.~\cite{aberg2020} and proves that the witness
conjectured in this reference is indeed optimal for arbitrary $N$.

{\it Multilevel coherence witnesses.---}
Due to the simple structure of the matrix $C$ in Eq.~(\ref{example})
we can completely characterize its multilevel coherence properties 
and so the underlying distributions according to their network 
topologies for arbitrary $N$. For this purpose we need the concept 
of coherence witnesses. Consider an arbitrary pure state 
$\ket{\psi}=\sum_{i=1}^M c_i\ket{i}$. A $(k+1)$-level coherence 
witness is given by~\cite{ringbauer2018}
\begin{equation}
W_k=\id-\frac{1}{\sum_{i=1}^k |c_i^\downarrow|^2}\ketbra{\psi},
\end{equation}
where $c_i^\downarrow$ denote the coefficients $c_i$ reordered decreasingly according to their absolute values. This means that $\tr[W_k \varrho]\geq 0$, if $\varrho$ has coherence number $k$ or less.
For the maximally coherent state $\ket{\psi^+}=(\sum_{i=1}^N\ket{i})/{\sqrt{N}}$ this witness is of the form $W_k=\id-\ketbra{\mathbf{1}}/k$. This witness can easily be proven to be optimal for the family of states
$\varrho(\mu)=\mu\ketbra{\psi^+}+(1-\mu)\frac{\id}{N}$, which is, 
up to normalisation and suitable choice of the parameter $\mu$, 
equivalent to $C$. Thus we obtain $\tr[W_kC]=(1-1/k)\Delta+(1-N/k)\chi$. 
From this, it directly follows that $C$ is incompatible with a $k$-complete 
network, if
\begin{equation}\label{eq:resultGHZNk}
q > p+\eta-\sqrt{4\eta p+(\eta-1)^2},
\end{equation}
with $\eta=(N-1)2^{N-2}/[(N-k)(2^{N-1}-1)]$. The results are shown 
for the case $N=5$ and $k=4,3,2$ in Fig.~\ref{sfig:fullN5}. Furthermore, we note that this technique can be applied to large networks where an approach based on SDPs would become infeasible, due to the rapidly growing number of terms in Eqs.~(\ref{eq:SDPtest}, \ref{eq:blockconstraint}), which grows as $N \choose k$.

{\it Networks beyond dichotomic measurements.---}
In the case of more than two outcomes per measurement, the 
block coherence number of the covariance matrix needs to 
be tested. For the case of networks involving only 
bipartite sources we have the following:

\begin{observation}
\label{Obs:BlockMinus}
Let $\Gamma(\mathbf{v}) \in \mathcal{BC}_2$ be a covariance matrix
with block coherence number two. Then, whenever the signs of some off-diagonal 
blocks are flipped such that the matrix remains symmetric, the resulting 
matrix will also remain positive semidefinite.
\end{observation}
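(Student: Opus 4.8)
The plan is to unfold the definition of $\mathcal{BC}_2$ and exploit that membership in this set forces an essentially ``pairwise'' decomposition of $\Gamma(\mathbf{v})$. Since $\Gamma(\mathbf{v})\in\mathcal{BC}_2$, we may write $\Gamma(\mathbf{v})=\sum_\lambda p_\lambda\ketbra{\psi^\lambda}$ with $p_\lambda\geq 0$ and each $\ket{\psi^\lambda}$ supported on at most two of the blocks, i.e.\ $\ket{\psi^\lambda}=P_{a}\ket{\psi^\lambda}+P_{b}\ket{\psi^\lambda}$ for some pair $\{a,b\}$ depending on $\lambda$. Collecting the summands according to their (unordered) support pair yields
\begin{equation}
\Gamma(\mathbf{v})=\sum_{i<j}Y_{ij}+\sum_i Y_{ii},\qquad Y_{ij}\geq 0,\quad Y_{ij}=(P_i+P_j)\,Y_{ij}\,(P_i+P_j).
\end{equation}
The first point I would record is that, for $i\neq j$, the off-diagonal block $P_i\Gamma(\mathbf{v})P_j$ receives a contribution \emph{only} from $Y_{ij}$: a term $\ketbra{\psi^\lambda}$ with support $\{a,b\}$ can populate the $(i,j)$ block only when $\{i,j\}=\{a,b\}$.

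Next comes the key step. Flipping the sign of a single off-diagonal block pair $(i_0,j_0)$ (and, by symmetry, of $(j_0,i_0)$) amounts to replacing $Y_{i_0 j_0}$ by $\tilde Y_{i_0 j_0}=U\,Y_{i_0 j_0}\,U^{\dagger}$, where $U=P_{i_0}-P_{j_0}$ on the support of $Y_{i_0 j_0}$ (extended arbitrarily to a unitary elsewhere); this $U$ is a real orthogonal involution, so $\tilde Y_{i_0 j_0}\geq 0$, and it negates precisely the $i_0\!\leftrightarrow\! j_0$ off-diagonal blocks while leaving all diagonal blocks intact. Hence the sign-flipped matrix
\begin{equation}
\Gamma'(\mathbf{v})=\sum_{(i,j)\neq(i_0,j_0)}Y_{ij}+\tilde Y_{i_0 j_0}+\sum_i Y_{ii}\geq 0,
\end{equation}
and, crucially, the right-hand side is again a sum of positive operators each supported on at most two blocks, so $\Gamma'(\mathbf{v})\in\mathcal{BC}_2$ as well. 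Therefore one can iterate: to realise an arbitrary symmetric sign pattern $\epsilon_{ij}=\pm 1$ on the off-diagonal blocks, flip the pairs with $\epsilon_{ij}=-1$ one after another, applying the argument above to the current (still block-coherence-two) matrix at each step. After finitely many steps the fully sign-flipped matrix is positive semidefinite, which is the claim.

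The main obstacle to watch for is precisely \emph{why} one cannot finish with a single conjugation by a block-diagonal sign matrix $\mathrm{diag}(s_1\id,\dots,s_N\id)$: a general symmetric sign pattern on the off-diagonal blocks need not factor as $\epsilon_{ij}=s_i s_j$ (already on a triangle one cannot have $\epsilon_{12}=\epsilon_{23}=+1$ and $\epsilon_{13}=-1$ of this form), so the iterative argument together with the stability of $\mathcal{BC}_2$ under single flips is what does the work. The only other thing to verify carefully is that grouping the ensemble $\{p_\lambda,\ket{\psi^\lambda}\}$ by support pair genuinely produces positive operators with the stated block structure and correctly accounts for the diagonal blocks; this is elementary once the grouping is set up. Note that no property of covariance matrices is used beyond $\Gamma(\mathbf{v})\in\mathcal{BC}_2$, so in the dichotomic case of Observation~\ref{Obs:Factorize} the statement specialises to the familiar fact that sign-flipping off-diagonal entries of the matrix $C$ preserves positivity once the comparison matrix $M(C)$ is positive semidefinite.
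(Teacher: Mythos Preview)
Your argument is correct and follows the same core idea as the paper: decompose $\Gamma(\mathbf{v})$ into rank-one projectors supported on at most two blocks, and observe that the sign flip on a block pair $(i,j)$ is realised at the level of each contributing vector by $P_i\ket{\psi}+P_j\ket{\psi}\mapsto P_i\ket{\psi}-P_j\ket{\psi}$. Your version is more explicit than the paper's---in particular, your grouping into the $Y_{ij}$, the observation that the $(i,j)$ off-diagonal block is fed solely by $Y_{ij}$, and the iterative handling of a general (non-factorisable) sign pattern $\epsilon_{ij}$ are points the paper leaves implicit---but the underlying mechanism is identical.
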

To see this, note that any matrix with block coherence number two 
can be written as a convex combination of pure states with 
coherence rank two, i.e., $\ket{\psi}=P_i\ket{\psi}+P_j\ket{\psi}$. 
For any such state, adding a minus sign in the density operator 
corresponds to the transformation $P_i\ket{\psi}+P_j\ket{\psi}\mapsto P_i\ket{\psi}-P_j\ket{\psi}$, under which the density operator remains positive 
semidefinite. 

To demonstrate the power of this Observation, let us consider again 
the GHZ-type distribution, but now with three outcomes per measurement,
\begin{eqnarray}\label{eq:GHZ3outcomes}
    P(x_1,x_2,x_3)=&p&\delta_0^{(3)} + q\delta_1^{(3)} + r\delta_2^{(3)}\\ &+&(1-p-q-r)\frac{1-\delta_0^{(3)}-\delta_1^{(3)}-\delta_2^{(3)}}{3^3-3}.\notag
\end{eqnarray}
A straightforward calculation provides a regime where this is incompatible
with the triangle network, see Fig.~\ref{fig:threeoutcomes}.

\begin{figure}[t!]
\centering
  \includegraphics[width=.5\linewidth]{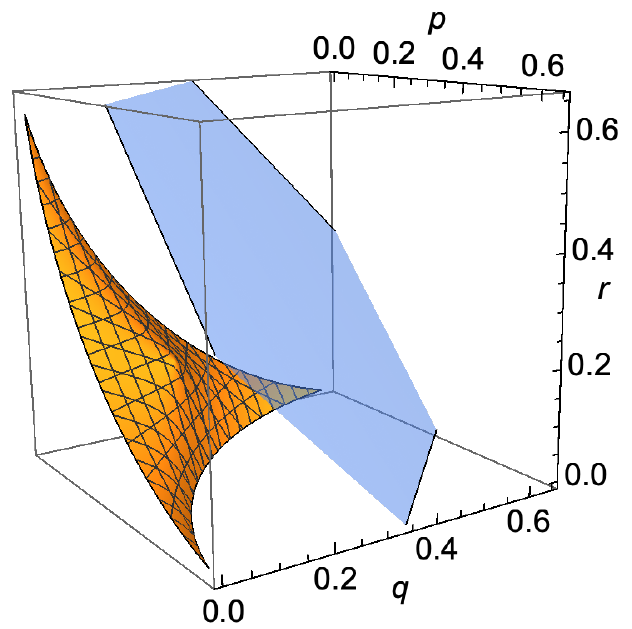}
\caption{Analysis of the GHZ-type distribution with three outcomes 
per measurement in Eq.~\eqref{eq:GHZ3outcomes} using Observation~\ref{Obs:BlockMinus}. Everything above the orange surface is detected to be incompatible 
with the triangle network. The blue surface represents the 
normalization constraint.}
\label{fig:threeoutcomes}
\end{figure}

{\it Characterizing networks with monogamy relations.---}
Another possibility to characterize networks is to evaluate monogamy 
relations for the coherence between different subspaces \cite{kraft2019}. 
The idea is that the amount of coherence that can be shared between 
one subspace and all other subspaces is limited if a certain block coherence 
number is imposed. To be more precise, for a trace one positive semidefinite 
block matrix $X=\qty[X_{\alpha\beta}]_{\alpha,\beta=0}^N$ with block coherence 
number $k$ it holds that
\begin{equation}
\sum_{\beta=1}^{N}\left\| X_{0 \beta}\right\|_{\rm tr}
\leq \sqrt{k-1}\sqrt{\tr[X_{00}](1-\tr[X_{00}])}.
\end{equation}
If we consider the normalized matrix matrix $\tilde{C}=C/\tr[C]$, 
evaluating such a monogamy relation provides a necessary 
criterion for $C$ to have coherence number $k$.
For the matrix in Eq.~(\ref{example}) this gives
\begin{equation}
\label{eq:monogamy}
\Delta-(\frac{\sqrt{N-1}}{\sqrt{k-1}}-1)\chi\geq 0.
\end{equation}
Hence, if this inequality is violated then the observed correlations 
are not compatible with a $k$-complete network. This is also shown in Fig.~\ref{sfig:triangleComp} for the triangle network. Although this 
test is in this case not as powerful as the analytical solution, it 
is easy to evaluate especially for large networks, since it requires 
only computing traces of smaller block matrices.

{\it Further remarks.---}
So far, we provided criteria to show that correlations
are incompatible with a $k$-complete network. It would be
interesting to derive also sufficient criteria for being
compatible with a given network structure. In the framework
of Ref.~\cite{aberg2020} this is not directly possible, 
as the criterion in Eqs.~(\ref{eq:SDPtest}, \ref{eq:blockconstraint})
is a convex relaxation of the original problem. Still, coherence 
theory allows to identify scenarios where the covariance matrix
can be certified to have a small block coherence number $k$, so the
covariance matrix approach must fail to prove incompatibility 
with a $k$-complete network. 

Here we can make two small observations in this direction: 
(i) The following results from Ref.~\cite{ringbauer2018} 
can directly applied to networks with dichotomic outcomes. 
Namely, if we have for the normalized matrix
$\tilde{C} \geq \frac{N-k}{N-1} \Lambda(\tilde{C})$, 
where $\Lambda$ is the fully decohering map, mapping any matrix to its 
diagonal part, then $\tilde{C}\in\mathcal{BC}_k$, implying that 
the test in Eqs.~(\ref{eq:SDPtest}, \ref{eq:blockconstraint}) for 
$(k+1)$-complete networks will fail. Furthermore we have that if 
$\tr[\tilde{C}^2]/\tr[\tilde{C}]^2\leq1/(N-1)$, 
then $\tilde{C}$ is two-level coherent.
(ii) In the general case, if $M_b(\Gamma) \geq 0$, where 
$M_b(\Gamma)$ is the block comparison matrix defined by
\begin{equation}
    (M_b[\Gamma])_{\alpha\beta}=
  \begin{cases}
    (\norm*{\Gamma_{\alpha\alpha}^{-1}})^{-1} & \text{for $\alpha=\beta$} \\
    -\norm{\Gamma_{\alpha\beta}} & \text{for $\alpha \neq \beta$,}
  \end{cases}
\end{equation}
with $\norm{X}$ denoting the largest singular value of the block $X$, 
then $\Gamma \in\mathcal{BC}_2$. A detailed discussion is given in the Appendix.

{\it Conclusion.---}
In this work we have established a connection between the theory 
of multilevel coherence and the characterization of quantum networks. 
To be precise, we showed that a recent approach based on covariance
matrices leads to a well studied problem in coherence theory; 
consequently, many results from the latter field can be transfered 
to the former. This provides a useful application of the resource 
theory of multilevel coherence outside of the usual realm of quantum 
states.

There are several interesting problems remaining for future work. 
First, it would be highly desirable to extend the covariance approach 
to the case where each node of the network can perform more than one
measurement. This will probably lead to significantly refined tests for network topologies. Second, it seems to be promising to study the coherence
in networks on the level of the resulting quantum state, and not the 
covariance matrix. This may shed light on the question which types of 
network correlations are useful for applications in quantum information
processing. 

\begin{acknowledgements}
This work was supported by the ERC (Consolidator Grant No. 683107/TempoQ), the DFG and the Austrian Science Fund (FWF): J 4258-N27.
\end{acknowledgements}

\appendix

\section{Appendix: Sufficient conditions for block coherence number two}

Let the block matrix $X=\qty[X_{\alpha\beta}]>0$, with 
$X_{\alpha\beta}\in\mathds{C}^{d\times d}$, be partitioned as follows
\begin{equation}\label{eq:paritioning}
X=\left[\begin{array}{cccc}
X_{11} & X_{12} & \cdots & X_{1 K} \\
X_{21} & X_{22} & \cdots & X_{2 K} \\
\vdots & \vdots & \ddots & \vdots \\
X_{K 1} & X_{K 2} & \cdots & X_{KK}
\end{array}\right].
\end{equation}

\begin{definition}[from Ref.~\cite{feingold1962}]
Let $X$ be partitioned as in Eq.~\eqref{eq:paritioning}. 
If the matrices $X_{\alpha \alpha}$ on the diagonal 
are non-singular, and if
\begin{equation}\label{eq:blockdiagdominant}
    (\norm*{X_{\alpha \alpha}^{-1}})^{-1}
    \geq\sum_{\substack{\beta=1\\\beta\neq \alpha}}^K \norm{X_{\alpha\beta}},
\end{equation}
then $X$ is called \emph{block diagonally dominant}. Here, $\norm{Y}$ denotes 
the largest singular value, so for the positive  $X_{\alpha \alpha}$ the expression 
$\norm{X_{\alpha \alpha}^{-1}}^{-1}$ is the smallest eigenvalue of 
$X_{\alpha \alpha}$.
\end{definition}

\begin{observation}\label{obs:coherenceNumberTwo}
If $X$ is positive and block diagonally dominant, then
the block coherence number is smaller or equal to two, $bcn(X)\leq 2$.
\end{observation}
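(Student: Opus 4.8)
My plan is to lift to the level of operator blocks the classical argument that a scalar diagonally dominant positive matrix has a positive comparison matrix, and therefore coherence number at most two. Concretely, I would exhibit an explicit decomposition
\begin{equation*}
X=\sum_{\alpha<\beta}Y^{(\alpha\beta)}+\bigoplus_{\alpha=1}^{K}R_{\alpha},
\end{equation*}
in which every $Y^{(\alpha\beta)}\geq 0$ has all blocks outside the block-rows and -columns $\alpha$ and $\beta$ equal to zero, and each $R_{\alpha}\geq 0$. This is enough: a positive operator whose range lies in $\mathcal{V}_\alpha\oplus\mathcal{V}_\beta$ is, through its spectral decomposition, a nonnegative combination of pure states of block coherence rank at most two, while each $R_\alpha$ contributes block coherence rank one terms; dividing by $\tr X$ then presents $X/\tr X$ as a convex combination of such states, so that $bcn(X)\leq 2$.

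The off-diagonal blocks leave no freedom, since $Y^{(\alpha\beta)}$ must carry $X_{\alpha\beta}$ and $X_{\beta\alpha}=X_{\alpha\beta}^{\dagger}$ in positions $(\alpha,\beta)$ and $(\beta,\alpha)$. The substance lies in the choice of the two diagonal blocks, and there I would replace the scalar modulus by the \emph{matrix absolute value}: set $A_{\alpha\beta}:=(X_{\alpha\beta}X_{\alpha\beta}^{\dagger})^{1/2}=|X_{\beta\alpha}|$ and place $A_{\alpha\beta}$ and $A_{\beta\alpha}$ on the diagonal of $Y^{(\alpha\beta)}$. Positivity of $Y^{(\alpha\beta)}$ then reduces to the fact that, for any operator $Y$,
\begin{equation*}
\begin{pmatrix}|Y^{\dagger}| & Y \\ Y^{\dagger} & |Y|\end{pmatrix}\geq 0,
\end{equation*}
which follows immediately from the singular value decomposition $Y=\sum_i\sigma_i\ketbra{u_i}{v_i}$, because the left-hand side equals $\sum_i\sigma_i\ketbra{w_i}$ with $\ket{w_i}=\left(\begin{smallmatrix}\ket{u_i}\\\ket{v_i}\end{smallmatrix}\right)$; applying it with $Y=X_{\alpha\beta}$ gives $Y^{(\alpha\beta)}\geq 0$ for every pair.

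It then remains to verify that $R_{\alpha}:=X_{\alpha\alpha}-\sum_{\beta\neq\alpha}A_{\alpha\beta}\geq 0$, and this is precisely where block diagonal dominance is used. Since each $A_{\alpha\beta}\geq 0$,
\begin{equation*}
\sum_{\beta\neq\alpha}A_{\alpha\beta}\;\leq\;\Bigl\|\,\sum_{\beta\neq\alpha}A_{\alpha\beta}\,\Bigr\|\,\id\;\leq\;\sum_{\beta\neq\alpha}\norm{A_{\alpha\beta}}\,\id\;=\;\sum_{\beta\neq\alpha}\norm{X_{\alpha\beta}}\,\id,
\end{equation*}
using that the largest singular value of an operator equals that of its absolute value and $\norm{X_{\beta\alpha}}=\norm{X_{\alpha\beta}}$. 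By Eq.~\eqref{eq:blockdiagdominant} the right-hand side is at most $(\norm{X_{\alpha\alpha}^{-1}})^{-1}\,\id$, and since $X\geq 0$ forces $X_{\alpha\alpha}\geq 0$ with $(\norm{X_{\alpha\alpha}^{-1}})^{-1}=\lambda_{\min}(X_{\alpha\alpha})$, this is $\leq X_{\alpha\alpha}$; hence $R_\alpha\geq 0$ and the decomposition is complete. The delicate point is the matrix-absolute-value step: unlike in the scalar case one cannot simply ``take moduli'', and the chosen $A_{\alpha\beta}$ must at once make each $2\times 2$ block positive \emph{and} have operator norms summing as required by Eq.~\eqref{eq:blockdiagdominant}; the singular-value identity above is exactly what reconciles the two requirements, and I expect it to be the crux.
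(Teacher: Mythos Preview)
Your argument is correct and follows the same strategy as the paper: peel off positive semidefinite $2\times 2$-block pieces $Y^{(\alpha\beta)}$ that carry the off-diagonal blocks, and then use block diagonal dominance to verify that the block-diagonal remainder is positive semidefinite. In fact your placement of the matrix absolute values---$(X_{\alpha\beta}X_{\alpha\beta}^{\dagger})^{1/2}$ in the $(\alpha,\alpha)$ slot and $(X_{\alpha\beta}^{\dagger}X_{\alpha\beta})^{1/2}$ in the $(\beta,\beta)$ slot---is the right one, whereas the paper's $G^{\alpha\beta}$ has these swapped, a small slip (for $Y=\ketbra{0}{1}$ the paper's $2\times 2$ block matrix is not positive semidefinite) that your SVD computation avoids.
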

\begin{proof}
Suppose $X$ satisfies the hypothesis. Define $2\times 2$ block matrices
\begin{equation}
    G^{\alpha\beta}=\mqty[\abs{X_{\alpha\beta}} && X_{\alpha\beta} \\ X_{\alpha\beta}^{\dagger} && \abs{X_{\alpha\beta}^{\dagger}}],
\end{equation}
where $\abs{X_{\alpha\beta}}=\sqrt{X_{\alpha\beta}^{\dagger}X_{\alpha\beta}}$ and the support of $G^{\alpha\beta}$ is the subspace $\alpha,\beta$. Clearly, the $G^{\alpha\beta}$ are positive semidefinite and have block coherence number two. Next, consider the matrix $D=X - \sum_{\alpha=1}^K\sum_{\beta> \alpha} G^{\alpha\beta}$. Since $X>0$ it is also hermitian, and thus, $X_{\beta\alpha}=X_{\alpha\beta}^{\dagger}$, precisely as for $G^{\alpha\beta}$. From this we can conclude that the off-diagonal blocks of $D$ vanish and the diagonal blocks are given by $D_{\alpha \alpha }=X_{\alpha \alpha }-\sum_{\substack{\beta=1, \beta\neq \alpha}}^K\abs{X_{\alpha \beta}}$. Furthermore, observe that $\lambda_{\text{min}}(X_{\alpha \alpha}) \geq \sum_{\substack{\beta=1, \beta\neq \alpha}}^K\lambda_{\text{max}}(X_{\alpha \beta})\geq\lambda_{\text{max}}(\sum_{
\substack{\beta=1, \beta\neq \alpha}}^K X_{\alpha \beta})$, where the first inequality is due to Eq.~\eqref{eq:blockdiagdominant} and the second inequality is straightforward. This proves that, besides being block diagonal, $D$ is also positive semidefinite. Thus $X$ can be written as a positive sum of a block incoherent matrix $D$ and matrices $G^{\alpha\beta}$ of block coherence 
number two, from which the statement follows.
\end{proof}

The next concept that is needed is the so-called comparison matrix, 
which is defined as follows.

\begin{definition}[from Ref.~\cite{polman1987}]
Let $X$ be partitioned as in Eq.~\eqref{eq:paritioning} and $X_{\alpha \alpha}$ non-singular. Then the block comparison matrix $M_b[X]$ is defined by
\begin{equation}
    (M_b[X])_{\alpha \beta}=
  \begin{cases}
    (\norm*{X_{\alpha \alpha }^{-1}})^{-1} & \text{for $\alpha=\beta$} \\
    -\norm{X_{\alpha \beta}} & \text{for $\alpha \neq \beta $.}
  \end{cases}
\end{equation}
\end{definition}
From this definition it is evident that if the comparison matrix $M_b[X]$ exists and is (strictly) diagonally dominant, then $X$ itself is (strictly) block diagonally dominant.

\begin{definition}[M-matrix]\label{def:Mmatrix}
Let the matrix $A=(a_{ij})$ be a real matrix such that $a_{ij}\leq0$ for $i\neq j$. Then $A$ is called a nonsingular M-matrix if and only if every real eigenvalue of $A$ is positive.
\end{definition}

\begin{definition}[Def. 3.2. in Ref.~\cite{polman1987}]\label{def:blockHmatrix}
If there exist nonsingular block diagonal matrices $D$ and $E$ such that $M_b[DXE]$ is a nonsingular M-matrix, then $X$ is said to be a nonsingular block H-matrix.
\end{definition}

\begin{lemma}[Lemma 4. in Ref.~\cite{polman1987}]\label{lem:blockdiagonal}
If $X$ is a nonsingular block H-matrix then there exist nonsingular block diagonal matrices $D$ and $E$ such that $DXE$ is strictly block diagonally dominant.
\end{lemma}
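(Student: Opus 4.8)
The plan is to reduce the claim to a classical scaling property of nonsingular M-matrices. By hypothesis $X$ is a nonsingular block H-matrix, so there are nonsingular block diagonal matrices $D_0$ and $E_0$ such that $A:=M_b[D_0XE_0]$ is a nonsingular M-matrix. I would keep $D_0$ fixed and only rescale the block columns of $D_0XE_0$, block by block, so that the associated block comparison matrix becomes strictly diagonally dominant in the ordinary scalar sense; by the remark following the definition of $M_b$ in the appendix, strict diagonal dominance of $M_b[\cdot]$ is precisely strict block diagonal dominance of the matrix itself, which is what we want.

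First I would invoke the standard characterization of M-matrices: a $Z$-matrix $A$ (real, with $A_{\alpha\beta}\le 0$ for $\alpha\neq\beta$) is a nonsingular M-matrix if and only if there exists a strictly positive vector $u=(u_1,\dots,u_K)^{T}$ with $Au>0$ componentwise; concretely one may take $u=A^{-1}\mathbf 1$, using that for a nonsingular M-matrix $A^{-1}$ exists, is entrywise nonnegative, and has no vanishing row. Writing out the $\alpha$-th component of $Au>0$ and using $A_{\alpha\beta}\le 0$ for $\beta\neq\alpha$ together with $A_{\alpha\alpha}>0$, this inequality is exactly the statement that $A\,\mathrm{diag}(u_1,\dots,u_K)$ is strictly diagonally dominant by rows.

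Next I would lift this scalar rescaling to the block level. Let $E_1$ be the nonsingular block diagonal matrix whose $\alpha$-th diagonal block is $u_\alpha I$, and set $E:=E_0E_1$ and $D:=D_0$, which are again nonsingular block diagonal. With $Y:=D_0XE_0$, right multiplication by $E_1$ multiplies the $\beta$-th block column by $u_\beta$, so $(YE_1)_{\alpha\beta}=u_\beta Y_{\alpha\beta}$. Hence for $\alpha\neq\beta$ one has $\norm{(YE_1)_{\alpha\beta}}=u_\beta\norm{Y_{\alpha\beta}}$, while $(YE_1)_{\alpha\alpha}=u_\alpha Y_{\alpha\alpha}$ is still nonsingular with $\norm{(YE_1)_{\alpha\alpha}^{-1}}^{-1}=u_\alpha\norm{Y_{\alpha\alpha}^{-1}}^{-1}$. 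Comparing these expressions entry by entry with the definition of the block comparison matrix yields $M_b[DXE]=M_b[YE_1]=A\,\mathrm{diag}(u_1,\dots,u_K)$, which was just shown to be strictly diagonally dominant. By the remark relating diagonal dominance of $M_b$ to block diagonal dominance of the matrix, $DXE$ is strictly block diagonally dominant, which proves the lemma.

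The step I expect to require the most care is the lifting: one must verify that conjugating the block structure by a block diagonal scaling transforms the block comparison matrix exactly by the corresponding scalar diagonal scaling — in particular that $\norm{(u_\alpha Y_{\alpha\alpha})^{-1}}^{-1}=u_\alpha\norm{Y_{\alpha\alpha}^{-1}}^{-1}$ and that the largest singular values of the off-diagonal blocks scale linearly — and that the diagonal blocks stay nonsingular so that $M_b[DXE]$ is well defined in the first place. The remaining ingredient, the existence of a positive vector $u$ with $Au>0$ for a nonsingular M-matrix, is classical and can simply be cited.
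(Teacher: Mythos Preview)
The paper does not supply its own proof of this lemma; it is quoted verbatim as Lemma~4 of Polman~\cite{polman1987} and simply cited. Your argument is correct and is precisely the classical M-matrix scaling argument underlying Polman's proof: take the positive vector $u$ with $Au>0$ guaranteed by the M-matrix property of $A=M_b[D_0XE_0]$, scale the block columns by $u_\alpha I$, and observe that the block comparison matrix transforms as $M_b[\,\cdot\,]\mapsto M_b[\,\cdot\,]\operatorname{diag}(u)$, which is then strictly diagonally dominant. The homogeneity checks you flag, $\norm{(u_\alpha Y_{\alpha\alpha})^{-1}}^{-1}=u_\alpha\norm{Y_{\alpha\alpha}^{-1}}^{-1}$ and $\norm{u_\beta Y_{\alpha\beta}}=u_\beta\norm{Y_{\alpha\beta}}$, are immediate since $u_\alpha>0$.

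One remark worth making: in the proof of the Theorem that follows, the paper actually extracts from Polman's argument a slightly sharper consequence than the lemma states, namely that for Hermitian $X_\epsilon$ with $M_b[X_\epsilon]$ already a nonsingular M-matrix one can take a \emph{symmetric positive} scaling $D_\epsilon X_\epsilon D_\epsilon$. Your construction, with $D=D_0$ fixed and only $E$ modified, yields a one-sided scaling and does not directly give this symmetric form. For the lemma as stated this is immaterial, but if you intend to feed your proof into the subsequent Theorem you would need the extra observation that when $A=M_b[X_\epsilon]$ is symmetric the same positive vector $u$ also gives $u^{T}A>0$, so that the two-sided scalar scaling $\operatorname{diag}(\sqrt{u})\,A\,\operatorname{diag}(\sqrt{u})$ is strictly diagonally dominant and lifts to $D_\epsilon=\operatorname{block\text{-}diag}(\sqrt{u_\alpha}\,I)>0$.
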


\begin{theorem}
Let $X$ be partitioned as in Eq.~\eqref{eq:paritioning} and positive semidefinite (but not necessarily strictly positive). If $M_b(X)\geq0$, then $X$ has $bcn(X)\leq2$.
\end{theorem}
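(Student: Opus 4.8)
The plan is to reduce the statement to the strictly positive, strictly block diagonally dominant situation already handled in Observation~\ref{obs:coherenceNumberTwo}, by a perturbation‑and‑limit argument. First I would regularize: for $\epsilon>0$ set $X_\epsilon=X+\epsilon\id>0$. Only the diagonal blocks change, by $\epsilon\,\id_d$, so $\lambda_{\min}((X_\epsilon)_{\alpha\alpha})=\lambda_{\min}(X_{\alpha\alpha})+\epsilon$ while the off‑diagonal blocks are untouched; hence $M_b(X_\epsilon)=M_b(X)+\epsilon\,\id$. Since $M_b(X)\ge0$ by hypothesis, $M_b(X_\epsilon)$ is strictly positive definite, and being in addition a matrix with non‑positive off‑diagonal entries it is a nonsingular M‑matrix in the sense of Definition~\ref{def:Mmatrix}.

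Next I would build a symmetric block rescaling of $X_\epsilon$ that is strictly block diagonally dominant. A nonsingular M‑matrix $B$ possesses a strictly positive vector $s$ with $Bs>0$ entrywise — for instance $s=B^{-1}\mathbf 1$, which is entrywise positive because $B^{-1}\ge0$ entrywise and has no vanishing row. Apply this with $B=M_b(X_\epsilon)$, let $\mathcal T$ be the block‑diagonal operator acting as $s_\alpha\,\id_d$ on the $\alpha$‑th subspace, and set $\hat X_\epsilon=\mathcal T X_\epsilon\mathcal T>0$. A direct computation gives $(M_b(\hat X_\epsilon))_{\alpha\beta}=s_\alpha s_\beta\,(M_b(X_\epsilon))_{\alpha\beta}$ for all $\alpha,\beta$, so the $\alpha$‑th diagonal‑dominance defect of $M_b(\hat X_\epsilon)$ equals $s_\alpha\,(M_b(X_\epsilon)\,s)_\alpha>0$; thus $M_b(\hat X_\epsilon)$ is strictly diagonally dominant and, by the remark following the definition of the block comparison matrix, $\hat X_\epsilon$ is strictly block diagonally dominant. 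Observation~\ref{obs:coherenceNumberTwo} then yields $bcn(\hat X_\epsilon)\le2$. Since $\mathcal T$ is block‑diagonal and invertible it commutes with every subspace projector $P_\alpha$, so conjugating by $\mathcal T^{-1}$ maps rank‑one operators of block coherence rank at most two to such operators again; writing $\hat X_\epsilon$ as a mixture of these and conjugating back yields $bcn(X_\epsilon)=bcn(\mathcal T^{-1}\hat X_\epsilon\mathcal T^{-1})\le2$.

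It then remains to let $\epsilon\to0$, which is legitimate because the set of positive semidefinite matrices with block coherence number at most two is closed: such a matrix is precisely one of the form $\sum_{\{\alpha,\beta\}}Y_{\alpha\beta}$ with each $Y_{\alpha\beta}\ge0$ supported on $P_\alpha+P_\beta$ (group the pure states of a block‑coherence‑rank‑two decomposition by their two‑subspace supports), and this is a Minkowski sum of finitely many closed cones all lying inside the pointed cone of positive semidefinite matrices; if $Y^{(n)}=\sum_{\{\alpha,\beta\}}Y^{(n)}_{\alpha\beta}\to Y$ then $0\preceq Y^{(n)}_{\alpha\beta}\preceq Y^{(n)}$ keeps all summands bounded, and a convergent subsequence exhibits the limit in the same form. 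Since $X_\epsilon\to X$ with $bcn(X_\epsilon)\le2$, we get $bcn(X)\le2$.

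I expect the main obstacle to be the interface with Observation~\ref{obs:coherenceNumberTwo}: that result needs a positive semidefinite input, so the rescaling taking us to strict block diagonal dominance must be a congruence by a single block‑diagonal matrix, not the two‑sided product $DXE$ that Lemma~\ref{lem:blockdiagonal} provides in general. Making the two block‑diagonal scalings equal — possible here because $M_b(X)$ is symmetric for Hermitian $X$ — is exactly what preserves positivity, and one must check that $s=M_b(X_\epsilon)^{-1}\mathbf 1$ indeed produces a symmetric congruence with the claimed dominance. The remaining ingredients — that a positive‑definite matrix with non‑positive off‑diagonal entries is a nonsingular M‑matrix, that $B^{-1}\mathbf 1>0$ for such $B$, and the closedness of the block‑coherence‑number‑two cone — are routine.
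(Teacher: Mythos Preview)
Your argument is correct and follows essentially the same route as the paper: regularize to $X_\epsilon=X+\epsilon\id$, observe that $M_b(X_\epsilon)=M_b(X)+\epsilon\id$ is a nonsingular M-matrix, block-diagonally rescale to achieve strict block diagonal dominance, apply Observation~\ref{obs:coherenceNumberTwo}, and pass to the limit. Where the paper invokes the proof of Lemma~\ref{lem:blockdiagonal} from Ref.~\cite{polman1987} to obtain a congruence $D_\epsilon X_\epsilon D_\epsilon$ and then lower semicontinuity of $bcn$ for the limit, you supply these two ingredients explicitly via the scalar-block rescaling $s=M_b(X_\epsilon)^{-1}\mathbf 1$ and a direct closedness argument for $\mathcal{BC}_2$; this makes the proof self-contained but does not change the strategy.
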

\begin{proof}
The proof follows the idea of Ref.~\cite{ringbauer2018}. First, define the operator $X_{\epsilon}=X+\epsilon \id$, for $\epsilon\geq0$. Then, for $\epsilon>0$ we have that $M_b[X_\epsilon]=M[X]+\epsilon \id > 0$. Evidently, since $M_b[X_\epsilon]$ is a real matrix with non-positive off-diagonal entries and furthermore has only strictly positive eigenvalues it is a nonsingular M-matrix, according to Def.~\ref{def:Mmatrix}. Then, according to Def.~\ref{def:blockHmatrix} $X_\epsilon$ is a nonsingular block H-matrix. From the proof of Lemma~\ref{lem:blockdiagonal} in Ref.~\cite{polman1987} we can conclude that there exists a block diagonal matrix $D>0$ such that $D_{\epsilon}X_{\epsilon}D_{\epsilon}$ is strictly block diagonally dominant. Then it follows from Observation~\ref{obs:coherenceNumberTwo} that strictly block diagonally dominant matrices can have at most block coherence number two. We find that $bcn(X_\epsilon)=bcn(D_{\epsilon}X_{\epsilon}D_{\epsilon})\leq2$, and since the block coherence number is 
lower semi-continuous we have $bcn(X)=bcn(\lim_{\epsilon\rightarrow 0^+} X_\epsilon)\leq \lim_{\epsilon\rightarrow 0^+} bcn(X_\epsilon)\leq 2$.
\end{proof}

\end{document}